\newcommand{\Yes}{\raisebox{-0.2em}{\includegraphics[width=2.5ex]{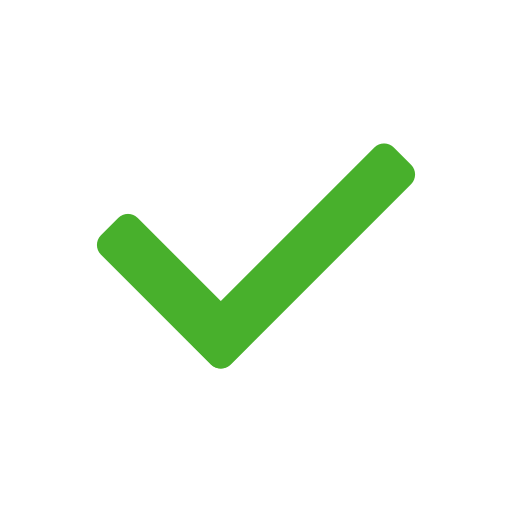}}\xspace}
\newcommand{\No}{\raisebox{-0.2em}{\includegraphics[width=2.5ex]{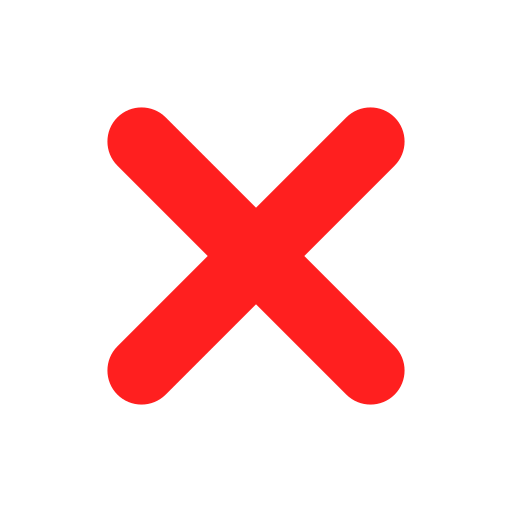}}\xspace}
\newtheorem{property}{Property}
\definecolor{green_colorblind}{HTML}{029E73}
\definecolor{orange_colorblind}{HTML}{D55E00}
\definecolor{darkred}{rgb}{0.55, 0.0, 0.0}
\newcommand{\Tsafe}{T^{(\text{safe})}}
\newcommand{\Treplace}{T^{(\text{replace})}}
\title{You Don't Bring Me Flowers: Mitigating Unwanted Recommendations Through Conformal Risk Control\thanks{Accepted at the 19th ACM Conference on Recommender Systems (RecSys 2025).}}
\author[1]{Giovanni De Toni} 
\author[2]{Erasmo Purificato} 
\author[2]{Emilia Gomez} 
\author[1]{Bruno Lepri} 
\author[3]{Andrea Passerini} 
\author[2]{Cristian Consonni} 
\affil[1]{Fondazione Bruno Kessler, \texttt{\{gdetoni,lepri\}@fbk.eu}}
\affil[2]{European Commission, Joint Research Centre (JRC)\thanks{\textbf{Disclaimer}: The view expressed in this paper is purely that of the authors and may not, under any circumstances, be regarded as an official position of the European Commission.} \texttt{\{erasmo.purificato,emilia.gomez-gutierrez,cristian.consonni\}@ec.europa.eu}}
\affil[3]{DISI, University of Trento, \texttt{andrea.passerini@unitn.it}}
\date{\vspace{-10mm}}
\begin{document}

\maketitle

\begin{abstract}
Recommenders are significantly shaping online information consumption.
While effective at personalizing content, these systems increasingly face criticism for propagating irrelevant, unwanted, and even harmful recommendations.
Such content degrades user satisfaction and contributes to significant societal issues, including misinformation, radicalization, and erosion of user trust.
Although platforms offer mechanisms to mitigate exposure to undesired content, these mechanisms are often insufficiently effective and slow to adapt to users' feedback.
This paper introduces an intuitive, model-agnostic, and distribution-free method that uses conformal risk control to provably bound unwanted content in personalized recommendations by leveraging simple binary feedback on items.
We also address a limitation of traditional conformal risk control approaches, \ie the fact that the recommender can provide a smaller set of recommended items, by leveraging implicit feedback on consumed items to expand the recommendation set while ensuring robust risk mitigation.
Our experimental evaluation on data coming from a popular online video-sharing platform demonstrates that our approach ensures an effective and controllable reduction of unwanted recommendations with minimal effort.
The source code is available here: \url{https://github.com/geektoni/mitigating-harm-recsys}.

\end{abstract}

\section{Introduction}

The widespread use of recommender systems on online platforms has significantly changed how users engage with content.
These systems are integral to modern digital media, facilitating personalized navigation through extensive online content across domains such as e-commerce~\cite{ghanem2022balancing,loukili2023machine}, social media~\cite{gao2023surveygnns,sanchez2021social}, and content streaming~\cite{deldjoo2021multimedia,elahi2022towards}, and tailoring experiences based on historical behaviors and expressed preferences~\cite{fernandez2020recsysmisinformation,boratto2021connecting}.
By leveraging user-generated data, recommenders aim to enhance user engagement~\cite{xue2023prefrec}, satisfaction~\cite{nguyen2018user}, and platform profitability~\cite{cai2019trustworthy}.
However, despite their proven utility and benefits, the basic concepts that make recommendation algorithms effective, such as \textit{similarity-based filtering} and \textit{preference prediction}, can inadvertently contribute to the spread of unwanted content~\cite{liu2024train}.
Platforms have faced increasing criticism for creating \textit{filter bubbles} and \textit{rabbit holes} that reinforce existing beliefs, and for spreading potentially dangerous or harmful content~\cite{tomlein2021auditmisinformation}.
Such dynamics are often attributed to algorithms that prioritize \textit{engagement}, sometimes at the expense of content quality and truthfulness~\cite{tommasel2021ohars}.
Features like the ``\textit{Not Interested}'' button on YouTube~\cite{liu2024train} suffer from issues like low user awareness, complexity of use, or limited effectiveness.
The lack of transparency and explainability in such strategies further exacerbates these problems, making it difficult for users to understand why they are being shown certain content and how to avoid it.
As a result, users develop ``\textit{folk theories}'' to explain the recommendations they are getting and try to influence the recommender outcomes~\cite{bernstein2013quantifying, eslami2015always, eslami2016first,liu2024train}.
Recent works have studied how to mitigate unwanted content, biases, or ensure fairness in recommendations \cite{chee2024harm, ahn2024interactive, ali2021harmadvertising, morikControllingFairnessBias2020}, but it remains unclear how to grant users precise and transparent control over the systems' \textit{risk}. 

In this paper, we propose a post-hoc method based on conformal prediction~\cite{shafer2008tutorial,angelopoulos2023gentleintro} to provably mitigate the exposure to unwanted recommendations \textit{in expectation}.
Conformal prediction is a framework for uncertainty quantification for machine learning models that constructs \textit{prediction sets} containing the true output with a \textit{user-specified confidence level}, assuming data exchangeability.
It uses \textit{non-conformity scores} from model outputs to provide \textit{distribution-free}, \textit{finite-sample} guarantees.
{\color{black}
Conformal methods have recently been applied to decision support in classification tasks~\cite{straitouri2023improving, straitouri2024counterfactualpset, detoni2024predictionset}, natural language processing~\cite{campos2024surveyconformalnlp}, and recommendation systems~\cite{angelopoulos2023recommendation, xu2024conformal}.
However, in recommender systems, current approaches often overlook key challenges, such as re-recommending unwanted items or balancing personalization with content filtering.
}
To address this, we apply \textit{conformal risk control}~\cite{angelopoulos2022conformal}, which generalizes conformal prediction to control \textit{general loss functions} (or “\textit{risk}”) rather than coverage.
Our method builds recommendation sets by \textit{filtering} potentially unsafe items and \textit{replacing} them with previously consumed, \textit{safe} content, while provably respecting a user-defined risk level.
Specifically, our contributions include: 
\begin{enumerate}
    \item A comprehensive analysis of real data to understand \textit{reporting patterns} and watch time behavior related to unwanted content, providing insights into how users interact with and respond to such content.
    \item A method based on conformal risk control that provably limits unwanted content in recommendations, with theoretical guarantees, by leveraging \textit{repeated content}.
    \item A practical heuristic to choose \textit{safe} alternative content, balancing the need for personalization and the goal of minimizing unwanted recommendations.
    \item Experiments and ablations with a real-world dataset, highlighting our approach's effectiveness in significantly reducing unwanted content.
\end{enumerate}

\section{Related Work}

Recommender systems have faced growing criticism for promoting unwanted content such as misinformation~\cite{fernandez2020recsysmisinformation}, hate speech~\cite{badjatiya2017deep}, and offensive or irrelevant items~\cite{ali2021harmadvertising,yi2024isolationmultimodalrecs}.
These issues often stem from algorithmic bias~\cite{baeza2020bias,boratto2021advances}, en\-ga\-ge\-ment-driven objectives~\cite{hauptmann2022research}, and feedback loops that reinforce filter bubbles~\cite{cinus2022effect,tommasel2021want}.
Further, repeated exposure to undesirable recommendations impacts user trust and satisfaction~\cite{ytre-arne2021folk}, contributing to what researchers describe as ``algorithmic hate''~\cite{smith2022recsyshate}.
Unfortunately, efforts to empower users, such as feedback tools, are often hindered by low visibility and unclear system responses~\cite{liu2024train,bernstein2013quantifying}.

To address these challenges, prior work has proposed content moderation~\cite{schneider2023effectiveness}, fairness constraints~\cite{zehlike2022fairness}, and diversity-promoting techniques~\cite{tommasel2021ohars,smith2022recsyshate}.
While effective to a degree, these methods often lack formal guarantees on limiting exposure to harmful content and seldom incorporate fine-grained user behavior~\cite{kasirzadeh2023user,ge2024survey}.
Our approach is closely related to~\cite{morikControllingFairnessBias2020,chee2024harm} but differs in key ways: we provide item-level control over unwanted content exposure -- beyond group-level guarantees~\cite{morikControllingFairnessBias2020} -- and avoid the need to learn custom policies from user feedback ~\cite{chee2024harm}.
Recent work has introduced conformal methods to recommender systems~\cite{angelopoulos2023recommendation,xu2024conformal}, focusing on controlling metrics like Normalized Discounted Cumulative Gain (nDCG) or False Discovery Rate (FDR). However, to the best of our knowledge, no prior method has used conformal techniques to control the fraction of unwanted content in recommendations.

Lastly, repeated content consumption is a well-established behavior across platforms~\cite{anderson2014dynamics, ren2019repeat}.
Recent approaches have leveraged this phenomenon through repeat-explore strategies~\cite{ren2019repeat}, neural ODEs for temporal dynamics~\cite{dai2024recode}, and collaborative filtering models that capture item-specific repeat patterns with varying lifetimes~\cite{wang2019modeling}.
We exploit this phenomenon to offer formal guarantees on risk control while preserving performance.

\begin{figure*}
     \centering
     \begin{subfigure}[t]{0.7\textwidth}
          \centering
        \includegraphics[width=\linewidth]{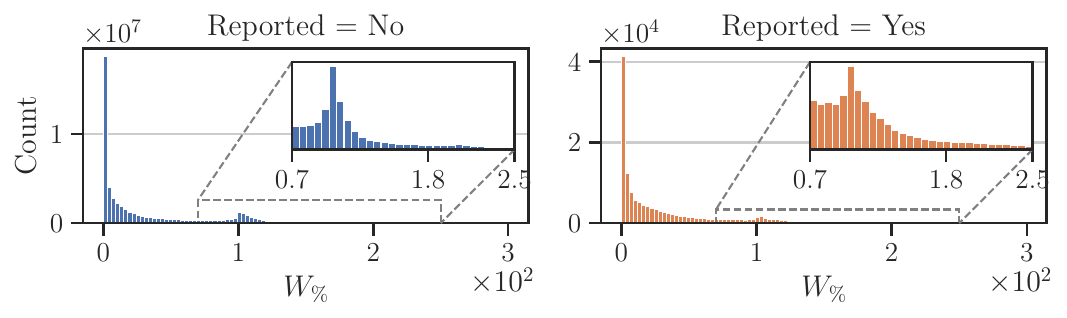}
        \caption{Distribution of $W_{\%}$ on reported (right) and unreported (left) videos.}
        \label{fig:user-fraction-time}
     \end{subfigure}
     \begin{subfigure}[t]{0.7\textwidth}
         \centering
          \includegraphics[width=\linewidth]{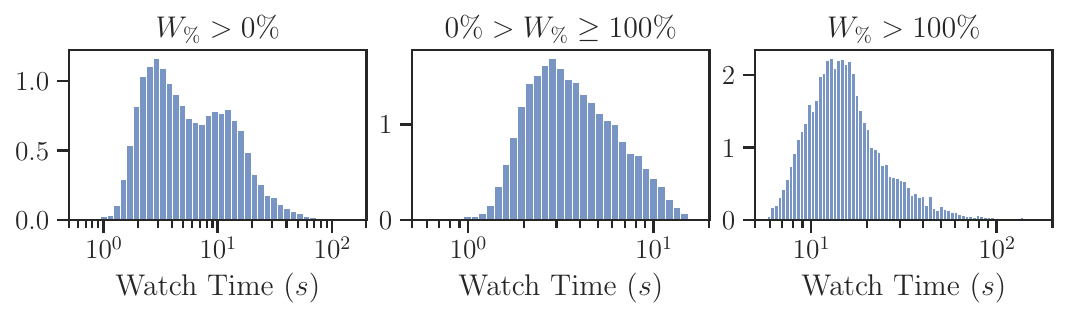}
          \caption{Reported videos watch time segmented by $W_{\%}$ (log-scale).}
          \label{fig:dwell_distribution_reported}
     \end{subfigure}
     \caption{Statistics on the user behaviour on Kuaishou data.}
\end{figure*}

\section{Case Study: A Video-Sharing Platform}
\label{sec:case-study-kuaishou}

We begin our study by examining data from a popular Chinese short-form video-sharing platform, \textit{Kuaishou}\footnote{\url{https://en.wikipedia.org/wiki/Kuaishou}}, which has over 300 million daily users at the time of writing.
We use the \textit{KuaiRand} dataset~\cite{kuairand2022}, which was collected by sampling 27000 Kuaishou users who collectively watched approximately 32 million videos over one month (from April 8 to May 8, 2022). This dataset contains various user interaction signals, such as watch time for each video and whether users liked or commented on a video.
Unfortunately, the dataset does not include Kuaishou’s ranking scores or the explicit recommendation order of videos.
However, the dataset also includes \textit{negative feedback} signals, such as when users click the \textit{“Do not recommend”} or \textit{“Report”} buttons while watching a video. Similar negative feedback data is available in the \textit{YouTube Regrets} dataset\footnote{\url{https://huggingface.co/datasets/mozilla-foundation/youtube_regrets}}, but only in an aggregated format without user profiles, limiting the scope of analysis.
We focus exclusively on short videos, removing advertisements and videos with zero duration (\eg dynamic photos). Notably, approximately $29\%$ of users provided negative feedback on at least one video during the data collection period. Consequently, our analysis is restricted to this subset.
After further filtering, our dataset consists of 7601 users, 56 million interactions, and 10 million unique videos. On average, a user watched $7,417$ videos during the collection period (median: $4,623$), with a large standard deviation of $9,223$.

\begin{table}[b]
\centering
\caption{Statistics on reported videos.}%
\resizebox{0.5\linewidth}{!}{%
\begin{tabular}{@{}cccc@{}}
\toprule
$H_X$ (\%) & N. of users & $\%$ of users & eCDF \\ \midrule
$0 \leq H < 0.1 $ & 4569 & 60.1 & 0.601 \\
$0.1 \leq H < 0.3 $ & 1909 & 25.1 & 0.852 \\
$0.3 \leq H < 1 $ & 827 & 10.9 & 0.961 \\
$1 \leq H < 100 $ & 296 & 3.9 & 1.0 \\ \bottomrule
\end{tabular}%
}
\label{tab:statistics-reported-videos}
\end{table}

\subsection{Sparsity of Negative Feedback}

We start by considering the propensity of users to give negative feedback to a recommended video seen within the interface.
Thus, we consider the percentage of videos a user reported overall videos they have seen
$
    H_X = \frac{\text{\# of videos reported by user $U$}}{\text{\# of videos seen by user $U$}} \cdot 100
$.
\cref{tab:statistics-reported-videos} shows that $60\%$ of users report at most 1 every 1000 videos they have seen. 
Overall, the negative feedback is very sparse since $> 95\%$ of users only report at most 1 every 100 videos recommended to them.
The relatively low reporting rate may be attributed to the accessibility of the \textit{``Do not recommend''} and \textit{``Report''} options, which require multiple interactions within the interface.
Indeed, many users might simply skip a video they do not like.
Nevertheless, the data sparsity represents a challenge for recommender systems that want to exploit this information to minimize recommendations of unwanted content. 

\subsection{Analysis of the Watch Time}
\label{sec:analysis_dwell_time}

Next, we analyze user watch time behavior to investigate potential differences in the viewing patterns of reported videos. 
We define with $W \in \bbR_{\geq 0}$ the \textit{duration of a user-video interaction} (\eg how much time the user $U$ spent on that video $I$).
Since video durations vary, we measure each user-video interaction as the proportion of the video watched by the user:
\begin{equation}
W_{\%}(U,I) = \frac{W(U, I) \coloneqq \text{View time by user U of video I}}{\text{Duration of video $I$}} \cdot 100
\end{equation}
Figure~\ref{fig:user-fraction-time} shows the distribution of $W_{\%}$ for both reported and unreported videos.
Both distributions peak near zero, indicating that many videos are either skipped or watched very briefly.
We observe that $21\%$ of interactions result in no watch time.
There is also a peak near one, as highlighted, suggesting a different behavior closer to videos viewed in full.
Additionally, some user-video interactions exceed the video's original duration, indicating users occasionally linger on videos.
Similarly to prior research \cite{yin2013silence}, we focus on reported videos and group them based on their duration.
In our analysis, we consider videos with durations between 7 and 12 seconds, which are the most common.
Since the distribution is zero-inflated (many videos are not watched), we analyze only interactions with positive watch time ($W > 0$).
\cref{fig:dwell_distribution_reported} illustrates the distribution of watch time (log scale).
As seen in the left-most plot, there are peaks around zero and near 10 seconds.
Consequently, we model two scenarios: (a) interactions where the video is watched at most in full ($0\% < W_{\%} \leq 100\%$), and (b) interactions where the video is watched for longer than its duration ($W_{\%} > 100\%$).
\cref{fig:dwell_distribution_reported}'s centre and right-most plots show both distributions have approximately normal shapes with left and right tails, respectively.
Such behavior could be modeled by considering the LogNormal or Weibull distributions, which are commonly used for describing dwell time, \eg on the page views \cite{liu2010understanding, yin2013silence}.

\subsection{Trend of Video Reporting Patterns}
\label{sec:video-reporting-patterns}

We also analyze user behavior when rewatching previously viewed videos. On average, $2.6\%$ of videos watched by users had been seen before. Furthermore, as shown by \cref{tab:statistics-repeated-and-reported-videos}, $8\%$ of users are shown videos they previously reported, and another $8\%$ report previously watched videos that they had not reported initially.
As argued in \cref{sec:analysis_dwell_time}, the latter behavior suggests that users may have skipped the video on the first encounter or found the reporting process cumbersome (\eg requiring multiple clicks).
This hypothesis is supported by \cref{tab:statistics-repeated-and-reported-videos} that shows that the $75\%$ percentile of the watch time for previously seen videos reported anew (2nd row) resembles the $75\%$ percentile of initially reported videos (4th row). 
Interestingly, reported videos that get flagged a second time show a very similar watch time (3rd row). We postulate this is stronger feedback indicating items that the user strongly considers unwanted. 
These findings may be influenced by the platform's continuous stream of new content, which reduces the likelihood of users encountering the same video multiple times.
Nevertheless, $16\%$ of users either see or report again at least one previously seen video.
Lastly, \cref{tab:statistics-repeated-and-reported-videos} shows how videos seen a second time have a greater average watch time.
For videos reported anew the second time (2nd row), the watch time increases significantly, suggesting that users may deliberately review undesirable content more thoroughly before reporting. 

\paragraph{Summary of the analysis.}
We conclude the analysis of Kuaishou data by underlining some key insights that will be useful in designing harm-controlling recommendation systems:
\begin{itemize}
    \item The engagement and perceived harmfulness of videos are not fully correlated. As shown by \cref{fig:user-fraction-time}, reported videos have the same $W_\%$ distribution as non-reported videos. Thus, \textit{maximizing engagement does not imply providing less harmful videos to users.}
    \item In Kuaishou, the system suggests undesired content \textit{despite explicit user feedback} (cf. \cref{tab:statistics-repeated-and-reported-videos}); 
    \item Given that over 95\% of users report less than 1\% of videos they watch, recommender systems should maximize the usefulness of this little explicit negative feedback. 
    \item With 21\% of interactions resulting in no watch time and a strong peak near zero in view time distribution (cf. \cref{sec:analysis_dwell_time}), very short engagement can serve as an early indicator of harmful or undesired content.
\end{itemize}

\subsection{Unwanted, Disliked and Harmful Content}
\label{sec:unwanted-disliked-harmful}

Given the results of \cref{sec:case-study-kuaishou}, we want to emphasize how understanding user feedback is fundamental for the task at hand.
Indeed, online platforms provide various mechanisms for users to express feedback on recommended items.
Recent research shows that mobile app affordances lead to ``\textit{triggered essential reviewing}"~\cite{piccoli2016triggered}, where users provide quick, focused, and often emotionally charged feedback immediately after an experience.
In this context, it is crucial to recognize that users can express feedback through multiple channels, each carrying distinct and sometimes diverging meanings.
For instance, \textit{flagging an item as unwanted} and \textit{reporting it as harmful} are typically separate actions within a service interface.
Moreover, to effectively evaluate mitigation techniques in realistic settings, it is essential to consider feedback signals that distinguish between \textit{unwanted} or \textit{harmful items} and those that simply fail to meet user expectations.
For example, our analysis of the KuaiRand dataset reveals that videos with high watch times (suggesting positive engagement) may still be explicitly flagged as unwanted (\cf \cref{tab:statistics-repeated-and-reported-videos}). 
However, many studies~\cite{chen2024sigformer,seo2022siren,chee2024harm} assume that a rating below a certain threshold (\eg below 3 on a 1–5 scale) implies the item is either unwanted or harmful.
In other datasets, such as the Music Streaming Sessions Dataset (MSSD)~\cite{brost2019music}, item skips indicate unwanted content for a user.
These proxies, as seen, can be insufficient.
To the best of our knowledge, only the KuaiRand dataset \cite{kuairand2022} provides such explicit, disaggregated feedback, despite the widespread availability of these mechanisms on modern platforms.

\begin{table}[t]
\centering
\caption{User's behavior with repeated videos.
Given a video $I$, the user can choose to report it (\No) or not (\Yes).
We show the average watch time (1st time and 2nd time the user sees it), the standard deviation, and the third quartile in brackets.
}
\label{tab:statistics-repeated-and-reported-videos}
\resizebox{0.75\linewidth}{!}{%
\begin{tabular}{lllll}
\toprule
Behaviour & I (\%) & U (\%) & Watch Time 1st (s) & Watch Time 2nd (s)  \\ \midrule
\Yes $\rightarrow$ \Yes & 99.79 & 100.0 & $2.71\pm \scriptstyle 12.15 \;\; (1.39)$ & $10.48\pm \scriptstyle 25.85 \;\; (7.95)$ \\
\Yes $\rightarrow$ \No & 0.11 & 8.69 & $0.50\pm \scriptstyle 7.41 \;\; (0.0)$ & $10.73\pm \scriptstyle 20.85 \;\; (10.13)$ \\
\No $\rightarrow$ \No & 0.07 & 6.33 & $7.74\pm \scriptstyle 15.83 \;\; (6.64)$ & $7.85\pm \scriptstyle 15.86 \;\; (6.72)$ \\ 
\No $\rightarrow$ \Yes & 0.03 & 1.49 & $1.36\pm \scriptstyle 6.91 \;\; (0.0)$ & $6.46\pm \scriptstyle 25.42 \;\; (2.5)$ \\ \bottomrule
\end{tabular}%
}
\end{table}

\section{Recommender Systems With Risk Control}

Given the key insights obtained from the analysis of Kuaishou data, let us now define more formally a \textit{risk-controlling recommender system}. 
Consider a finite set of items $\calI \in \{i_1,\ldots,i_N\}$ and a finite set of users $\calU \in \{u_1,\ldots,u_M\}$, each represented by $d$-dimensional feature vectors, respectively.
Our recommender system outputs a set of items $S(U) \subseteq \calI$ based on some relevance score estimated by a ranker $f(U, I) \in \bbR^{+}$ predicting how much an item $I$ is relevant for a user $U$ based on the information within $U$ and $I$ (\eg in terms of watch-time). 
In two-stage recommender systems, we might perform further \textit{post-processing} activities to \textit{filter} certain items, before presenting the final recommendation to the user.
Similarly to Angelopoulos et al.~\cite{angelopoulos2023recommendation}, let us consider a post-processing step where we employ some threshold $\lambda \in \bbR$ to pick only certain items, based on a score $s: U \times I \rightarrow \bbR$.
For example, consider $s(U, I=i) = f(U, I=i)$, we might want to keep only items with a predicted relevance above a certain base value: 
\begin{equation}
    T_\lambda(U) = \{ i \in \calI : s(U, I=i) \geq \lambda \}
    \label{eqn:set-filtereing-T}
\end{equation}
After post-processing, our objective becomes providing the best set $S_\lambda(U) \subseteq T_\lambda(U)$ maximizing a target metric computed on the recommendations (\eg \textit{serendipity} \cite{ge2010beyond}).
Generally, we are interested in recommending the \textit{most} relevant items to each user, under a budget $k \leq N$ \eg the number of slots in a YouTube main page. 
Let us define with $\pi(\calI) = \{i_1, \ldots, i_N\}$ the order induced by sorting the items given the ranker scores $f(U,I=i_j)$. 
If our ranker is reasonably good, then the optimal solution would be simply picking up to the $k$-th item from $\pi(T_\lambda(U))$:
\begin{equation}
    S_\lambda(U,k) = \{ i_j \in \pi(T_\lambda(U)) : j \leq k \}
\end{equation}
In classical learning-to-rank tasks, we try to learn the optimal ranker $f^* = \argmax_{f \in \calF} \bbE[\ell(S_\lambda(U, k))]$  where $\ell: \calI \rightarrow \bbR$ is our target metric, such as the nDCG.
Besides picking the most relevant items for the user, now we want to \textit{control} the \textit{risk} of the set $\calS_\lambda(X ,k)$ induced by the optimal ranker $f^*$. 
Let us assume we have a risk metric $R: \calI \rightarrow (0,1)$ we would like to provably \textit{bound} below a certain value $\alpha \in (0, 1)$. 
For example, we might want the \textit{fraction} of unwanted content in $S_\lambda(U, k)$ to be below a user-defined value.
Since we want to avoid performing costly operations such as re-training the ranker $f^*$, an intuitive way to reach such an objective is by acting on the \textit{post-processing} step, by finding the optimal $\lambda$ threshold ensuring our \textit{risk} is below a user-defined level $\alpha \in (0,1)$:
\begin{equation}
    \lambda^* = \argmax_{\lambda \in \Lambda} \bbE\left[\ell(S_\lambda(U, k))\right] \qquad \text{s.t.} \quad \bbE[R(S_\lambda(U, k))] \leq \alpha
    \label{eqn:standard-equation}
\end{equation}
In the next sections, we will show how to provably bound the risk in \cref{eqn:standard-equation} for \textit{any} ranker by leveraging the user's negative feedback. 

\subsection{Measuring Unwanted Content}
In our setting, we consider an item \textit{unwanted} for a user if she expresses distaste for it once seen within the recommendations.  
Users can express their distaste for certain recommendations by giving either a negative score or binary feedback (YouTube or Kuaishou \textit{``Don't recommend''} button \cite{mccrosky2021youtube}).
Similarly to \cite{wapo2020facebook}, we focus on the latter, and we consider a setting in which we can observe only binary negative feedback, as it happens for Kuaishou data (cf. \cref{sec:case-study-kuaishou}).
Let us denote with $H(U,I) \in \{0,1\}$ a binary random variable indicating if an item $I$ has been flagged ($H=1$) by the user $U$.
Given a set of recommendations $S_\lambda(U, k)$, we denote its overall \textit{risk} as the fraction of items the user has flagged once presented with it:
\begin{equation}
    R_H(S_\lambda(U, k)) = \frac{|\{ i \in S_\lambda(U, k) : H(U, I=i) = 1 \}|}{|S_\lambda(U, k)|}
    \label{eqn:set-based-harmfulness}
\end{equation}

\subsection{Conformal Risk Control}

By simply plugging the previous risk definition in \cref{eqn:standard-equation}, we have our optimization objective. 
However, how do we pick the threshold $\lambda$ to control the risk defined by \cref{eqn:set-based-harmfulness}?
We can find the optimal threshold via \textit{conformal risk control} \cite{angelopoulos2022conformal} by employing a held-out calibration set $\{(u,i,h))_j\}_{i=1}^n$, where for each user-item pairs we record also if it was flagged as \textit{unwanted} ($h=1$).
We first state the main theorem from \cite{angelopoulos2022conformal}, and then we explain its practical implications.
\begin{theorem}
    Assume that $R_H(S_\lambda(X, k))$ is non-increasing in $\lambda$, right-continuous,  $R_H(S_{\lambda_{max}}(U,k)) \leq \alpha$ and $\sup_\lambda R_H(S_\lambda(U,k)) \leq 1 < \infty$ almost surely.
    Consider $\hat{R}(\lambda) = \frac{1}{n}\sum_{j=1}^n R_H(S_\lambda(U=u_j, k))$, given a held-out calibration set.
    Given any desired risk upper bound $\alpha \in [0, 1]$, if we choose a threshold as:
    \begin{equation}
        \hat{\lambda} = \inf \left\{ \lambda : \frac{n}{n+1}\hat{R}(\lambda) + \frac{1}{n+1} \leq \alpha \right\}
    \end{equation}
    then, we have $\bbE[R(S_{\hat{\lambda}}(U,k))] \leq \alpha$.
    \label{theorem:conformal-risk-control}
\end{theorem}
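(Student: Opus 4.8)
The plan is to reduce the statement to the standard exchangeability argument underlying conformal risk control, treating the fresh test user $U$ as an $(n{+}1)$-th calibration point. I would write $L_j(\lambda) \coloneqq R_H(S_\lambda(u_j,k))$ for $j=1,\dots,n$ and $L_{n+1}(\lambda) \coloneqq R_H(S_\lambda(U,k))$, so that the target quantity is exactly $\bbE[L_{n+1}(\hat\lambda)] = \bbE[R(S_{\hat\lambda}(U,k))]$. Because the calibration users and the test user are drawn exchangeably, $\{L_1,\dots,L_{n+1}\}$ is an exchangeable family of random functions, each non-increasing in $\lambda$, right-continuous, bounded by $1$, and satisfying $L_j(\lambda_{max})\leq\alpha$ almost surely.

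First I would introduce an ``oracle'' threshold computed from all $n{+}1$ losses, $\tilde\lambda \coloneqq \inf\{\lambda : \hat R_{n+1}(\lambda)\leq\alpha\}$, where $\hat R_{n+1}(\lambda) = \frac{1}{n+1}\sum_{j=1}^{n+1} L_j(\lambda)$. The assumption $L_j(\lambda_{max})\leq\alpha$ forces $\hat R_{n+1}(\lambda_{max})\leq\alpha$, so this set is non-empty and $\tilde\lambda\leq\lambda_{max}$; right-continuity then guarantees the infimum is attained, i.e. $\hat R_{n+1}(\tilde\lambda)\leq\alpha$. The essential feature I will exploit is that $\tilde\lambda$ is a symmetric function of the unordered multiset $\{L_1,\dots,L_{n+1}\}$.

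The bridge from the calibration-only threshold $\hat\lambda$ to the oracle $\tilde\lambda$ is where boundedness enters. Since $L_{n+1}(\lambda)\leq 1$, I would write $\hat R_{n+1}(\lambda) = \frac{n}{n+1}\hat R(\lambda) + \frac{1}{n+1}L_{n+1}(\lambda) \leq \frac{n}{n+1}\hat R(\lambda) + \frac{1}{n+1}$. Consequently every $\lambda$ satisfying the defining inequality of $\hat\lambda$ also satisfies $\hat R_{n+1}(\lambda)\leq\alpha$, so the feasible set for $\hat\lambda$ is contained in the one for $\tilde\lambda$, and taking infima over the smaller set yields $\hat\lambda\geq\tilde\lambda$. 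Monotonicity of $L_{n+1}$ then gives the pointwise bound $L_{n+1}(\hat\lambda)\leq L_{n+1}(\tilde\lambda)$.

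I would close with exchangeability. Because $\tilde\lambda$ is permutation-invariant in the $n{+}1$ points, swapping index $j$ with $n{+}1$ shows the pairs $(L_j,\tilde\lambda)$ are identically distributed, whence $\bbE[L_{n+1}(\tilde\lambda)] = \frac{1}{n+1}\sum_{j=1}^{n+1}\bbE[L_j(\tilde\lambda)] = \bbE[\hat R_{n+1}(\tilde\lambda)] \leq \alpha$. Chaining the steps gives $\bbE[R(S_{\hat\lambda}(U,k))] = \bbE[L_{n+1}(\hat\lambda)] \leq \bbE[L_{n+1}(\tilde\lambda)] \leq \alpha$. I expect the main obstacle to be precisely this exchangeability step: one must verify carefully that $\tilde\lambda$ depends on the data only through the unordered collection of loss functions, so that $L_{n+1}(\tilde\lambda)$ is exchangeable with each $L_j(\tilde\lambda)$. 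The boundedness and monotonicity hypotheses are what make the $\frac{1}{n+1}$ inflation term exactly absorb the unknown test-point loss, while right-continuity is needed only to ensure the infimum defining $\tilde\lambda$ is achieved.
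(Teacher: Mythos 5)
The paper does not actually prove this theorem: it states it as imported from Angelopoulos et al.~\cite{angelopoulos2022conformal} (``We first state the main theorem from [that work]''), so there is no in-paper proof to compare against. Your argument is a correct and complete reproduction of the standard proof from that reference --- the oracle threshold $\tilde\lambda$ built from all $n{+}1$ loss functions, the feasible-set inclusion giving $\hat\lambda \geq \tilde\lambda$ via the bound $L_{n+1}\leq 1$, monotonicity to pass from $\tilde\lambda$ to $\hat\lambda$, right-continuity to get $\hat R_{n+1}(\tilde\lambda)\leq\alpha$, and the symmetry of $\tilde\lambda$ in the unordered multiset to conclude by exchangeability --- and each of these steps is used exactly where the stated hypotheses require it.
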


In practice, as long as our risk function decreases as $\lambda$ increases, by choosing $\hat{\lambda}$ as in \cref{theorem:conformal-risk-control} we are guaranteed the risk will be \textit{provably} bound in expectation.
Practically speaking, if we consider as score the ranker output $s(U,I) = f(U,I)$, as we increase the threshold, fewer items will be contained within $T_\lambda(U)$.
Thus, the likelihood of picking harmful items within $S_\lambda(U, k)$ decreases monotonically since by removing items, we cannot increase $R_H(S_\lambda(U, k))$.  
Therefore, we can enjoy the risk-control guarantees provided by \cref{theorem:conformal-risk-control}\footnote{Our metric does not take into consideration the \textit{order} in which we present the item to the user, which might make $R_H(S_\lambda(U, k))$ non-monotone. In practice, we can monotonize any risk by taking $R_H(S_\lambda(U, k)) = \max_{\lambda' \geq \lambda} S_{\lambda'}(U, k)$ \cite{angelopoulos2022conformal}. 
}.
More importantly, the guarantees offered by \cref{theorem:conformal-risk-control} are model-agnostic and distribution-free. Therefore, they can be readily applied to \textit{any trained recommender}, given historical interaction data. 

\section{Risk Control via Item Replacement}

Recent applications of conformal prediction to recommender systems consider only the simple filtering approach as detailed in \cref{eqn:set-filtereing-T} \cite{angelopoulos2023recommendation,xu2024conformal,liang2024structured}.
The approach ensures the monotonicity requirements of \cref{theorem:conformal-risk-control} but at the price of losing control of the size of the resulting set $S_\lambda(U, k)$. 
First, notice that the sets are nested by $\lambda$ since $\lambda \geq \lambda' \Rightarrow T_{\lambda'}(U) \subseteq T_{\lambda}(U)$.    
If we use the ranker outputs as a scoring function, we will be able to influence the content of $S_\lambda(U,k)$ only when the threshold is greater than the score $f(U,I)$ of the $k$-th item in $\pi(I)$. 
Hence, in order to reduce the risk within $S_\lambda(U,k)$, we will have to choose a threshold $\lambda$ for which we will necessarily end up with fewer items than $k$. 
Indeed, we can state the following proposition:

\begin{proposition}
    Given a risk $\alpha \in [0,1]$ and $s(U,I)=f(U,I)$, there exist data distributions for which a thresholding rule, as in \cref{eqn:set-filtereing-T}, cannot provide the set with $H(S_\lambda(U,k)) \leq \alpha$ unless $|T_{\lambda}(U)|=0$.
    \label{prop:remove-based-harm-control-fail}
\end{proposition}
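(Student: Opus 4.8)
The plan is to prove this by an explicit adversarial construction: I would exhibit a single data distribution on which \emph{every} non-empty set reachable by thresholding has risk exactly $1$. It suffices to take a degenerate distribution supported on one user (or one in which all users share the same ranker scores and flags), so that the per-set risk of \cref{eqn:set-based-harmfulness} coincides with its expectation, matching the quantity bounded in \cref{theorem:conformal-risk-control}. The entire argument rests on one structural observation about the filter in \cref{eqn:set-filtereing-T}: since the score is the ranker output $s(U,I)=f(U,I)$, the set $T_\lambda(U)$ is a super-level set of a single scalar function, so the family $\{T_\lambda(U)\}_\lambda$ is a nested chain, and after the top-$k$ truncation the only recommendation sets we can ever produce are the \emph{prefixes} of the global ranking $\pi(\calI)$.

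First I would formalize this reachability claim as a short lemma. Writing $i_1,\ldots,i_N$ for the items in descending score order and $m(\lambda)=|T_\lambda(U)|$ for the number of items with score $\geq\lambda$, sorting $T_\lambda(U)$ just reproduces the prefix $\{i_1,\ldots,i_{m(\lambda)}\}$, so $S_\lambda(U,k)=\{i_1,\ldots,i_{\min(m(\lambda),k)}\}$. As $\lambda$ decreases from $+\infty$, $m(\lambda)$ sweeps $0,1,\ldots,N$, hence $\min(m(\lambda),k)$ attains every value in $\{0,1,\ldots,k\}$. Thus the achievable sets form exactly the chain $\emptyset\subset\{i_1\}\subset\{i_1,i_2\}\subset\cdots\subset\{i_1,\ldots,i_k\}$, and in particular $|S_\lambda(U,k)|=0\iff m(\lambda)=0\iff|T_\lambda(U)|=0$ for $k\geq 1$. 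The point is that raising $\lambda$ only ever discards the \emph{lowest}-scored items, precisely the ones the top-$k$ cut would have discarded anyway.

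With this in hand the counterexample is immediate: fix any target $\alpha\in[0,1)$ and place all the unwanted mass at the top of the ranking, i.e.\ set $H(U,i_1)=\cdots=H(U,i_k)=1$ and $H(U,i_{k+1})=\cdots=H(U,i_N)=0$. Then every non-empty achievable prefix $\{i_1,\ldots,i_j\}$ with $1\leq j\leq k$ consists \emph{entirely} of flagged items, so $R_H(S_\lambda(U,k))=1$ whenever $T_\lambda(U)\neq\emptyset$. Since $\alpha<1$, the constraint $R_H(S_\lambda(U,k))\leq\alpha$ can hold only when $S_\lambda(U,k)=\emptyset$, which by the lemma forces $|T_\lambda(U)|=0$, establishing the proposition. (The boundary case $\alpha=1$ is vacuous, since $R_H\leq 1$ always holds and no distribution can force an empty set, so the meaningful regime is $\alpha<1$.)

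I expect the ``main obstacle'' to be the reachability lemma rather than the construction itself: one must argue carefully that truncating to the top $k$ never lets thresholding escape the chain of prefixes, so that the abundant safe items $i_{k+1},\ldots,i_N$ are genuinely \emph{unreachable} — they are crowded out by the top-$k$ budget whenever $m(\lambda)\geq k$, and fall below the threshold whenever $m(\lambda)<k$, so neither mechanism can ever surface them. Minor care is also needed to break score ties with a fixed ordering so that $\pi(\calI)$ and the prefixes are well-defined. This unreachability of the safe content, despite its presence in the catalogue, is exactly the failure mode that the item-replacement scheme developed in this section is designed to repair.
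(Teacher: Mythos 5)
Your proof is correct and takes essentially the same approach as the paper: an explicit adversarial construction in which the unwanted items sit at the top of the ranking, so every non-empty super-level set $T_\lambda(U)$ (and hence every non-empty prefix $S_\lambda(U,k)$) is forced to contain them. The paper's version is a concrete four-item instance with one flagged top item and $\alpha=0.1$, whereas yours flags the entire top-$k$ and thereby covers every $\alpha<1$ while making explicit the prefix-reachability observation the paper leaves implicit — a mild strengthening, not a different route.
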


\begin{proof}
Consider the items $\calI = \{A,B,C,D\}$ where the ranker scores are $f(u, D) = 5$ and $f(u, i) = 1$ for $i \in \{A,B,C\}$. 
Thus, the ranker induces the following ordering $\pi(I) = \{D, A, B, C\}$ where we broke ties at random. 
Further, let us pick as a risk level $\alpha = 0.1$.
Then, it is easy to see how, for any $k \in [1,4]$, only with $\lambda > 5$ we have $H(S_\lambda(U,k) \leq 0.1$. 
Unfortunately, any threshold where $\lambda > 5$ implies $T_\lambda(I)=\varnothing$.
\end{proof}

Practically, it means that to provably control the associated risk of a recommendation, we might have to remove too many items from the recommendation pool.
Therefore, in our recommendation, we might not be able to return $k$ items to the user.
\cref{prop:remove-based-harm-control-fail} applies to \textit{any} score function. 
Consequently, this strategy is not optimal since, for example, in the case of a video streaming platform, we are essentially ``wasting'' items, thus potentially reducing engagement.

\begin{figure}[t]
    \centering
    \includegraphics[width=0.8\linewidth]{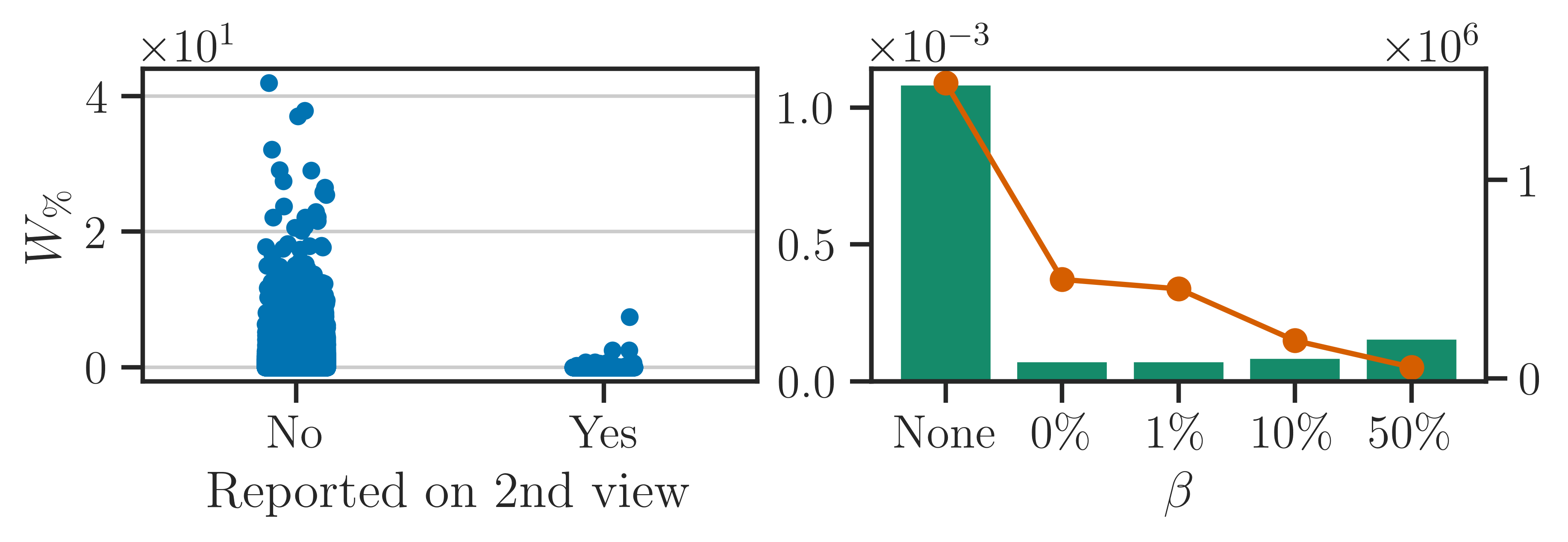}
    \caption{
    Validity of \cref{assumption:control_variable} in Kuaishou. (Left) Frequency of reported videos seen the second time based on their first watch time $W_{\%}$.
    (Right) Empirical $P(H = 1 \mid H_{1st}=0, W_\% > \beta)$ with varying $\beta$ ({\color{green_colorblind}green}) and total remaining repeated videos after thresholding over the watch-time ({\color{orange_colorblind} orange}). }
    \label{fig:assumption_1-checks}
\end{figure}

\subsection{A Property for Safe Alternatives}
\label{subsec:property-safe-alternatives}
Rather than simply removing items, we take an alternative approach, which consists of \textit{replacing} potentially unwanted items with safer ones, thus sidestepping the effects of \cref{prop:remove-based-harm-control-fail}.
However, how do we choose these \textit{safe items}?
Our approach exploits a simple fact: \textit{users consume again previously seen items} \cite{anderson2014dynamics,ren2019repeat,dai2024recode}.
Furthermore, as seen in Kuaishou data (\cref{tab:statistics-repeated-and-reported-videos}), users often dwell longer on already-consumed items, leading to greater engagement.
Inspired by the analysis in \cref{sec:case-study-kuaishou}, if we consider items the user has seen before, \textit{that were not reported as unwanted}, we might as well suggest them again in place of potentially unsafe items. 
As shown in \cref{tab:statistics-repeated-and-reported-videos}, there exist \textit{rare} items ($0.11\%$) which are reported only the second time a user watches them.
This poses a challenge since we might lose the monotonicity guarantees needed by \cref{theorem:conformal-risk-control}.
Therefore, we need a sensible way to select previously seen videos for which we are reasonably sure the user will not report anew.  
We formalize this desideratum in a simple property that can be tested by employing historical interaction patterns:
\begin{property}
    Denote with $H_{1st} \in \{0,1\}$ the event that the user has reported an item the first time they saw it.
    There exists a variable $C$ and a value $\beta \in \text{supp}(C)$ such that the probability of an item being flagged a second time is zero 
    $
      P(H = 1 \mid H_{1st}=0, C > \beta)  = 0
    $.
    \label{assumption:control_variable}
\end{property}
In the case of Kuaishou, we can pick as $C$ the watch time of videos.
Then, we can simply pick videos on which the users have spent time $W_{\%}(U, I)$ above a certain threshold. 
For example, \cref{fig:assumption_1-checks} shows the distribution of $P(H=1 \mid H_{1st} = 0, W_{\%})$. Indeed, the plot on the left shows that the videos that are reported in the second view have a much shorter watch time. 
Moreover, the right plot of \cref{fig:assumption_1-checks} shows that by filtering repeated videos with a thresholding $\beta$, we can reduce the likelihood of picking a harmful video almost to zero, globally. 
For $\beta = 50\%$, the probability slightly increases, since we have fewer repeated videos, and mostly because of a few outliers (\eg reported videos with a very high watch time). 
Practically, we provide an ablation study in \cref{sec:experiments-real-data} showing the effect of the choice of $\beta$ on risk control guarantees.

Thus, if \cref{assumption:control_variable} holds, we can simply replace items with a score below the risk control threshold, with items the user \textit{has previously} seen and not flagged as harmful, for which we have $C > \beta$. 
Let us denote the pool of \textit{safe} videos with the following:
\begin{equation}
    \Tsafe = \{ i' \in \calI_U : H(U,I=i') = 0 \\
    \wedge \; C(I=i') > \beta \}
\end{equation}
where $\calI_U \subseteq \calI$ indicates the items previously shown to the user $U$.
Here, we are assuming that the set $\calI_U$ is not empty, and more importantly, that there exist videos $i \in \calI_U$ such that $C(I=i) > \beta$.
A new user might not satisfy the conditions above, akin to \textit{cold start} issues \cite{park2009pairwise}.
In such a scenario, we might want to \textit{ask} the user to choose those videos for us.
%
%
Thus, the new pool of items will become:
\begin{equation}
  \begin{aligned}
    \Treplace_\lambda(U) = T_\lambda(U) \cup \Tsafe(U)
    \label{eqn:set_with_replaced_items}
\end{aligned}  
\end{equation}
It is straightforward to show that the new set $\Treplace_\lambda(U)$ satisfies the monotonicity requirements of conformal risk control, and it ensures we can always pick $k$ items to show to the user.

\begin{proposition} Given any $\lambda, \lambda' \in \Lambda$ such that $\lambda < \lambda'$.
Consider the set constructed with \cref{eqn:set_with_replaced_items}.
Then, if \cref{assumption:control_variable} holds, we have the following  
$
R_H\left(\Treplace_{\lambda'}(U)\right) \leq R_H\left(\Treplace_{\lambda}(U)\right)
$.
\label{prop:harm-decreasing-replacement}
\end{proposition}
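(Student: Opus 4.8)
The plan is to reduce the claim about the fraction $R_H$ to two separate monotonicity facts, one for its numerator and one for its denominator. First I would use \cref{assumption:control_variable} to pin down the numerator. Since every $i' \in \Tsafe(U)$ satisfies $H(U,I=i')=0$ and $C(I=i')>\beta$, the property $P(H=1 \mid H_{1st}=0, C>\beta)=0$ guarantees that re-recommending a safe item never triggers a flag (almost surely). Hence no element of $\Tsafe(U)$ ever contributes to the numerator of $R_H\big(\Treplace_\lambda(U)\big)$, and the flagged items inside $\Treplace_\lambda(U)=T_\lambda(U)\cup\Tsafe(U)$ coincide with $\{\, i \in T_\lambda(U) : H(U,I=i)=1 \,\}$.

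Next I would exploit the nestedness of the thresholding rule. Because $T_\lambda(U)=\{\, i : s(U,I=i)\ge\lambda \,\}$, we have $\lambda<\lambda' \Rightarrow T_{\lambda'}(U)\subseteq T_\lambda(U)$, so the flagged subset is nested as well and its cardinality is non-increasing in $\lambda$. Combined with the first step, this shows the numerator of $R_H$ can only weakly decrease as we raise the threshold from $\lambda$ to $\lambda'$.

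The delicate point -- and the step I expect to be the main obstacle -- is the denominator. Monotonicity of the numerator alone does \emph{not} imply monotonicity of the ratio, since $|\Treplace_\lambda(U)|$ also shrinks with $\lambda$: dropping a non-flagged, non-safe item lowers the denominator without touching the numerator and would \emph{raise} the fraction. The resolution is exactly what the replacement buys us. Because $\Tsafe(U)$ is a fixed reservoir of never-flagged items, we keep the presented set at the fixed budget $k$ by backfilling every dropped item with a safe one; under this size-preserving replacement the denominator stays equal to $k$ while the flagged count weakly decreases. Writing $R_H(\Treplace_{\lambda'}(U)) = f_{\lambda'}/k$ and $R_H(\Treplace_\lambda(U)) = f_\lambda/k$ with $f_{\lambda'}\le f_\lambda$ then yields the inequality immediately.

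Finally, I would record the fallback already available in the framework of \cref{theorem:conformal-risk-control}: even if one insists on the raw variable-size union and the ratio is not literally monotone, the risk can be monotonized by replacing $R_H(\Treplace_\lambda(U))$ with $\sup_{\lambda'\ge\lambda} R_H(\Treplace_{\lambda'}(U))$ \cite{angelopoulos2022conformal}, which preserves all conformal guarantees. I would therefore anchor the proof on the size-preserving replacement, since that is the construction actually used to return $k$ items, and cite the monotonization as the distribution-free safety net.
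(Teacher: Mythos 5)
Your first step reproduces the paper's entire argument: the authors omit the proof precisely because, under \cref{assumption:control_variable}, every item in $\Tsafe(U)$ has $H(U,I=i')=0$, so the flagged items in $\Treplace_\lambda(U)$ are exactly those of $T_\lambda(U)$, and nestedness ($\lambda<\lambda' \Rightarrow T_{\lambda'}(U)\subseteq T_\lambda(U)$) makes the flagged count non-increasing. Where you diverge is in refusing to stop there, and you are right to: numerator monotonicity does not give monotonicity of the fraction, because the denominator $|\Treplace_\lambda(U)|$ shrinks as well. Concretely, take $T_\lambda(U)=\{a,b\}$ with $H(U,a)=1$, $H(U,b)=0$, $s(U,a)>s(U,b)$, and $\Tsafe(U)=\{c\}$; a $\lambda'$ that drops only $b$ moves the risk from $1/3$ to $1/2$. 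So the proposition, read literally on the variable-size union, is not implied by the one-line observation the paper relies on.

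Your resolution -- arguing on the size-$k$ presented set, where every item that falls below the threshold is backfilled by a safe, hence unflagged, item, so the denominator is pinned at $k$ while the flagged count weakly decreases -- is sound (any item newly entering the top-$k$ must score below the dropped items and hence below $\lambda'$, forcing it to come from $\Tsafe(U)$), and it is the version of monotonicity that \cref{theorem:conformal-risk-control} actually consumes, since the theorem controls $R_H(S_\lambda(U,k))$. Be explicit, however, that this proves a statement about $S_\lambda(U,k)$ rather than about $\Treplace_\lambda(U)$ as named in the proposition, and that it needs $|\Tsafe(U)|$ large enough that the set never falls below $k$ (the paper itself admits this can fail). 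Your fallback via monotonization, $\sup_{\lambda'\geq\lambda} R_H(\cdot)$, is exactly the paper's own footnote and is the honest safety net for the raw union. In short: your proof is correct and strictly more careful than the paper's; the paper's omitted argument, taken at face value, has the gap you identified.
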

We omit the proof since it readily follows from the fact that any repeated item has $H(U, I=i)=0$ if \cref{assumption:control_variable} holds. 

\begin{algorithm}[t]
\caption{Risk-controlling pipeline for a recommender system}\label{alg:cap}
\begin{algorithmic}[1]
\Require $u$, user, $\calI$, items, $\alpha$, risk level, $f: U\times I \rightarrow \bbR$, ranker, $s: U\times I \rightarrow \bbR^+$, score function, $k$ recommendation size
\State $\hat{\lambda} \gets \textsc{RiskControl}(\alpha)$ \Comment{\cref{theorem:conformal-risk-control}}
\State $T_{\hat{\lambda}}(U) \gets \{ y \in Y : s(U, I=i) \geq {\hat{\lambda}} \}$
\State $\Tsafe \gets \{ y' \in Y_X : H(U,I=i') = 0
    \wedge \; W_\%(U, i') > \beta \}$
\State $\Treplace_{\hat{\lambda}}(U) \gets T_{\hat{\lambda}}(U) \cup \Tsafe(U)$
\State \Return $\{ i_j \in \pi(\Treplace_{\hat{\lambda}}(U)) : j \leq k \}$
\end{algorithmic}
\label{alg:risk-control-replacement}
\end{algorithm}

\section{A Simple Post-Hoc Pipeline}

We consolidate the previous findings into a simple post-hoc algorithm applicable to \textit{any} pre-trained recommender system -- requiring \textit{no retraining} and the simple conditions outlined in \cref{theorem:conformal-risk-control}.
\cref{alg:risk-control-replacement} presents the pseudo-code of the procedure. Given a user-defined risk level $\alpha \in (0,1)$, the algorithm provably controls the fraction of unwanted content in the final recommendation list (\cf \cref{eqn:set-based-harmfulness}).
For a given user $u$ and risk level $\alpha$, we first compute the threshold $\hat{\lambda}$ using the \textsc{RiskControl} procedure from \cref{theorem:conformal-risk-control} (line 1). We then remove from the item pool all candidates with scores below $\hat{\lambda}$ (line 2) and identify potential replacements (line 3)—previously seen, non-flagged videos with watch-time exceeding a threshold $\beta$ (\cf \cref{assumption:control_variable}). These replacements are added to the remaining item pool (line 4), and the top-$k$ items are selected using the ranker (line 5).
Note that the final recommendation set may contain fewer than $k$ items if no suitable replacements are available (\ie if $\Tsafe$ is empty).
Indeed, \cref{alg:risk-control-replacement} reduces to classical risk control via removal only if we \textit{discard} line 3, and set $\Tsafe = \{\}$.
Lastly, the scoring function used for filtering may differ from the one used for reordering. For instance, risk control guarantees would still hold in a two-stage system where the score is the probability of \textit{not reporting} an item, and the ranker predicts watch time.

\textit{Computational complexity}.
The threshold $\hat{\lambda}$ can be precomputed for any risk level $\alpha$ and user $U$ with a complexity of $O(Q)$, where $Q$ is the size of the calibration set. \textsc{RiskControl} (line 1) can be run only once and can be cached for efficient retrieval at inference time.
Constructing the filtered set $T_{\hat{\lambda}}(U)$ and the replacement set $\Tsafe(U)$ (lines 2–3) requires a single pass over the item pool, with complexity $O(|I|)$. Re-ranking and selecting the top-$k$ items (line 5) has a complexity of $O(|I| \log |I|)$.
Thus, the overall complexity of \cref{alg:risk-control-replacement} is dominated by the sorting step, resulting in $O(|I| \log |I|)$.
\cref{alg:risk-control-replacement} is compatible with standard optimization techniques used in recommender systems (\eg efficient indexing, caching, or approximate nearest neighbours for lines 2–4). Naturally, the threshold $\hat{\lambda}$ should be periodically updated as new interaction data becomes available.

\begin{figure*}[t]
     \centering
     \begin{subfigure}[t]{0.95\textwidth}
          \centering
        \includegraphics[width=\linewidth]{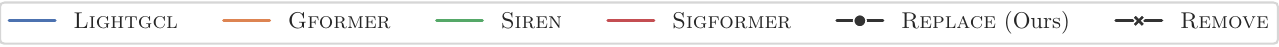}
     \end{subfigure}
     
     \begin{subfigure}[t]{0.40\textwidth}
          \centering
        \includegraphics[width=\linewidth]{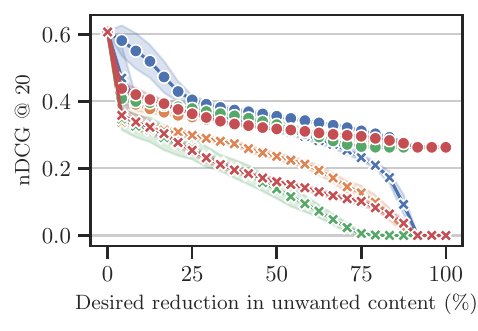}
        \caption{nDGC @ 20}
        \label{fig:synthetic-ndcg-experiments}
     \end{subfigure}
     \begin{subfigure}[t]{0.40\textwidth}
         \centering
          \includegraphics[width=\linewidth]{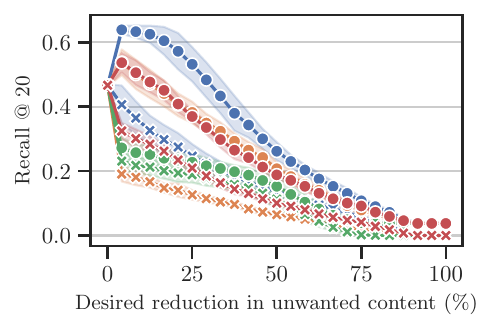}
          \caption{Recall @ 20}
          \label{fig:synthetic-recall-experiments}
     \end{subfigure}
     \hfill
     \begin{subfigure}[t]{0.40\textwidth}
         \centering
          \includegraphics[width=\linewidth]{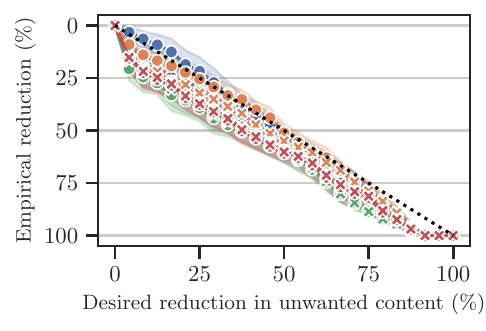}
          \caption{Reduction in $R_H(S_\lambda(U, k))$}
           \label{fig:synthetic-harmfulness-experiments}
     \end{subfigure}
     \begin{subfigure}[t]{0.40\textwidth}
         \centering
          \includegraphics[width=\linewidth]{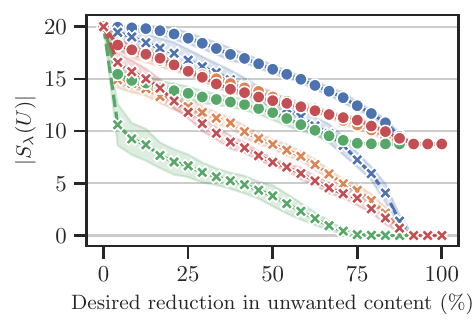}
          \caption{Size of $S_\lambda(U)$}
          \label{fig:experiments-real-size}
     \end{subfigure}
     \caption{Results over \textit{KuaiRand} ($k=20$ and $\beta = 0$).
     The results and standard deviation (shaded areas) are over 10 runs.
     }
\end{figure*}

\section{Experiments With Real Data}
\label{sec:experiments-real-data}

In this section, we evaluate the effects of the risk control strategies (\cref{eqn:set-filtereing-T,eqn:set_with_replaced_items}) on a simple two-stage recommender setup for a realistic video recommendation task.
In the following experiments, we aim to answer the following research questions:
\begin{itemize}
    \item [\textit{\textbf{RQ1}}] What is the impact of the conformal risk control on standard evaluation metrics (\eg \textit{Recall@k} and \textit{nDCG@k})?
    \item [\textit{\textbf{RQ2}}] What are the effects of employing different strategies for risk control (\cref{eqn:set-filtereing-T} and \cref{eqn:set_with_replaced_items})?
    \item [\textit{\textbf{RQ3}}] How many items do we have to replace from $S_{\hat{\lambda}}(U, k)$ to achieve the desired risk level? 
    \item [\textit{\textbf{RQ4}}] How does approximately satisfying \cref{assumption:control_variable} impact the desired unwanted content reduction?
    \item [\textit{\textbf{RQ5}}] If we divide the users according to their reporting habits in \textit{low-reporting} and \textit{high-reporting}, is \cref{alg:risk-control-replacement} reducing the fraction of unwanted content also for \textit{high-reporting} users?
\end{itemize}


\textit{\textbf{Dataset}.}
We use the KuaiRand dataset \cite{kuairand2022}, which contains both positive and negative user feedback, along with information about interactions with previously seen videos.
As discussed in \cref{sec:unwanted-disliked-harmful}, to the best of our knowledge, it is the only publicly available dataset that includes realistic feedback signals and a target metric such as video watch time.
For our experiments, we focus on videos viewed at least twice by at least one user, resulting in a dataset that includes both repeated and single interactions.
This leaves us with 5657 unique users, 117695 unique items, and more than 3 million interactions.
Then, we first partition the full interaction set into two datasets based on whether the interaction is repeated or not.
We adopt a 10-core setting, randomly splitting the subset with only single interactions into training (70\%), validation (15\%), and test (15\%) sets.
The validation set also serves as a calibration set for determining the threshold $\lambda$ as defined in \cref{theorem:conformal-risk-control}.
To ensure evaluation fidelity, we exclude all test set videos from the repeated interaction set.
Thus, the repeated interaction set only considers videos previously seen by users — \ie those appearing in either training or validation.

\textit{\textbf{Metrics}.}
We compare the effect of the risk control and various strategies on two widely used metrics: \textit{Recall@k} and \textit{nDCG@k}. In our experiments, we set $k=20$ as usually done in the literature~\cite{chen2024sigformer}. 
We also measure the empirical \textit{risk}, \ie the fraction of unwanted content in the recommendations, as defined in \cref{eqn:set-based-harmfulness}. 


\textit{\textbf{Models}.}
Similarly to previous studies \cite{wang2023uqfairness}, we evaluate the effectiveness of risk control using a two-stage recommender architecture: (a) the first stage filters items using the conformal risk control procedure described in \cref{eqn:set_with_replaced_items}; (b) the second stage re-ranks the filtered items using a model trained to predict watch-time $W(U, Y)_\%$.
We consider two risk-control strategies: \textsc{Remove} (\cref{alg:risk-control-replacement} without line 3 and $\Tsafe = \{\}$) and \textsc{Replace} (\cref{alg:risk-control-replacement}).
\textsc{Remove} is the classical risk-control strategy for recommender systems \cite{angelopoulos2023recommendation, xu2024conformal}.
For simplicity, under the \textsc{Replace} strategy, repeated items are not re-scored; instead, their original watch time is reused.
To evaluate the impact of different scoring functions $s(U, Y)$ on the effectiveness of risk control, we experiment with several state-of-the-art methods.
These include both sign-aware and unsigned graph-based models, depending on whether they incorporate negative user feedback (\eg reported videos):
\textbf{LightGCL} \cite{cai2023lightgcl} and \textbf{GFormer} \cite{li2023gformer}, state-of-the-art graph-based methods leveraging contrastive learning and the transformer architecture, respectively.
\textbf{SiReN} \cite{seo2022siren}, the classic sign-aware method for recommendation learning two embeddings from positive and negative interaction graphs, and \textbf{SIGFormer} \cite{chen2024sigformer}, a state-of-the-art sign-aware method employing the transformer architecture.
For the re-ranking stage, we adopt a simple Neural Collaborative Filtering (NCF) model \cite{he2017neural}, as our focus is not on maximizing performance on the Kuaishou dataset, but rather on isolating the effects of risk control.
Model parameters were selected according to the original papers and implemented using the authors’ source code.
Our full implementation is openly available\footnote{\url{https://github.com/geektoni/mitigating-harm-recsys}} to support reproducibility and further research.

\begin{figure}[t]
     \centering
     \includegraphics[width=0.65\linewidth]{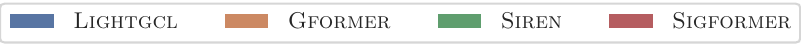}
     \includegraphics[width=0.55\linewidth]{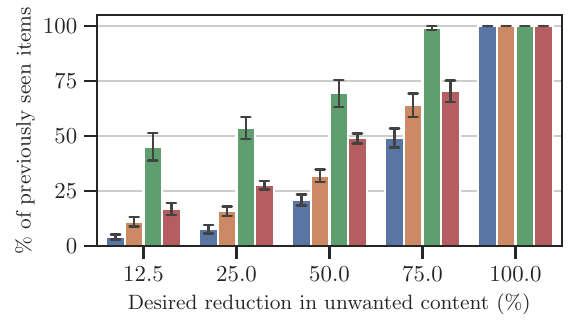}
     \caption{Fraction of previously seen items in the top-20 recommendations for Kuaishou ($k=20$ and $\beta = 0\%$).
     }
     \label{fig:repeated-items-in-recommendation}
\end{figure}

\textit{\textbf{(RQ1 \& RQ2) Replacing unwanted items ensures risk control and mitigates performance degradation}.}
We start by evaluating the impact of risk control on standard ranking metrics, \ie nDCG@20 and Recall@20.
As shown in \cref{fig:synthetic-harmfulness-experiments}, the level of harmfulness is tightly controlled across all strategies and ranking models.
For any target reduction in unwanted content ($\alpha$ in \cref{alg:risk-control-replacement}), the empirical reduction on the test set is equal to or better (\ie all points lie on or below the diagonal) than the desired target, confirming that \cref{alg:risk-control-replacement} reliably enforces risk control.
In general, the \textsc{Remove} strategy tends to exceed the target reduction, removing more items than strictly necessary.
This proves that \cref{alg:risk-control-replacement} provides a provable link between user actions (\eg flagging content) and the resulting changes in recommendations.
Indeed, by specifying a desired reduction level, users can provably influence their experience with guarantees on the expected reduction in unwanted content.
\cref{fig:synthetic-ndcg-experiments,fig:synthetic-recall-experiments} show how risk control affects nDCG@20 and Recall@20.
Removing items sharply degrades nDCG as the desired reduction level ($\alpha$ in \cref{eqn:standard-equation}) increases, due to the shrinking pool of candidate items (\cref{eqn:set-filtereing-T}).
In contrast, the \textsc{Replace} strategy retains a higher nDCG, though some degradation still occurs as previously seen items are reintroduced into the ranking.
This performance drop is partly explained by our experimental setup: we do not re-score repeated items but instead reuse their original watch time, which may not reflect user interest upon second exposure.
Recall decreases under both strategies, as removing or replacing items may discard potentially relevant, unseen content.
Finally, \cref{fig:experiments-real-size} illustrates the consequences of \cref{prop:remove-based-harm-control-fail}. When full removal of unwanted content is required ($\alpha = 0$), the sole option is to eliminate or replace all items in the recommendation set.
Moreover, for the \textsc{Replace} strategy, we might recommend less than $k$ items at test time, since we have finite replacement options. 

\begin{figure*}[t]
     \centering
     \begin{subfigure}[t]{0.40\linewidth}
          \centering
        \includegraphics[width=\linewidth]{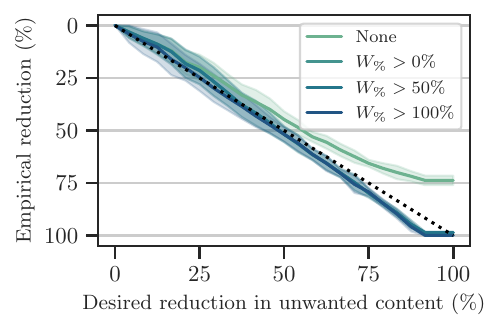}
        \caption{Reduction in $R_H(S_\lambda(U, k))$}
        \label{fig:ablation-beta-harmfulness}
     \end{subfigure}
     \begin{subfigure}[t]{0.40\linewidth}
         \centering
          \includegraphics[width=\linewidth]{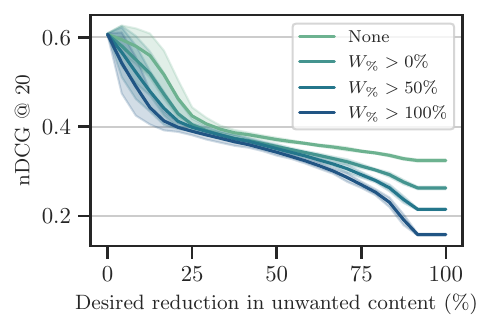}
          \caption{nDCG @ 20}
          \label{fig:ablation-beta-ndcg}
     \end{subfigure}
     \begin{subfigure}[t]{0.40\linewidth}
         \centering
          \includegraphics[width=\linewidth]{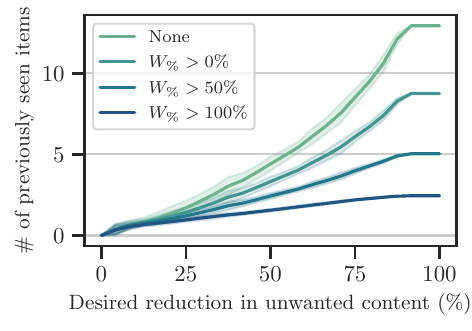}
          \caption{\# of previously seen items}
          \label{fig:ablation-beta-replacement}
     \end{subfigure}
     \begin{subfigure}[t]{0.40\linewidth}
         \centering
          \includegraphics[width=\linewidth]{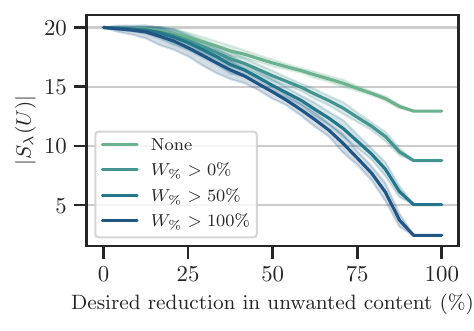}
          \caption{Size of $S_\lambda(U)$}
          \label{fig:ablation-beta-size}
     \end{subfigure}
     \caption{Ablation analysis on $\beta$ for \textsc{LightGCL} ($k=20$). 
     We consider only previously seen videos with a watch-time greater than $W_\% > \{0\%, 50\%, 100\%\}$, or without filtering (\textsc{None}). 
     The results and standard deviation (shaded areas) are computed over 10 runs.
     }
     \label{fig:ablation-experiments-beta}
\end{figure*}

\textit{\textbf{(RQ3) The choice of score function impacts the number of previously seen items in the recommendation set}.}
We examine the average fraction of previously seen items recommended to users.
While the risk control guarantees of \cref{theorem:conformal-risk-control} hold independently of the underlying ranking function, in practice, we favor models that minimize item replacements at an equal level of risk control.
\Cref{fig:repeated-items-in-recommendation} shows that sign-aware models (\textsc{SiReN} and \textsc{SIGformer}) lead to more replaced items compared to their unsigned counterparts (\textsc{LightGCL} and \textsc{GFormer}) for equivalent reductions in unwanted content.
This is somewhat unexpected, as sign-aware models incorporate both positive and negative feedback, whereas the unsigned models do not.
However, due to the low ratio of negative to positive feedback in the Kuaishou dataset, the additional information appears insufficient to improve ranking.
For example, \textsc{SiReN} may struggle to learn meaningful embeddings from the highly sparse negative interaction graphs.
As a result, sign-aware models exhibit higher uncertainty and require replacing or removing more items to meet risk control guarantees. In contrast, \textsc{LightGCL} and \textsc{GFormer} produce larger recommendation sets (\cf \cref{fig:experiments-real-size}) while minimizing item modifications.
In the next experiments, we focus on \textsc{LightGCL} since it emerged as the best scoring function.

\textit{\textbf{(RQ4) There exists a trade-off between ensuring the safety of replacement items and the recommendation set size}.}
Further, we investigate the impact of relaxing \cref{assumption:control_variable}, which is required to satisfy \cref{prop:harm-decreasing-replacement}.
Specifically, we conduct an \textit{ablation study} to assess the effect of the approximate satisfaction of this assumption.
\cref{fig:ablation-experiments-beta} shows the effect of increasing the filtering threshold $\beta$ to select safer replacement items.
As shown in \cref{fig:ablation-beta-harmfulness}, when no filtering is applied (\textsc{None}), risk control fails — indeed, some previously seen items, when recommended again, are revealed to be unwanted and reported by users (as shown in \cref{tab:statistics-repeated-and-reported-videos}).
This behavior reflects a form of distribution shift between calibration and test phases \cite{angelopoulos2022conformal}.
Conversely, filtering items based on their initial watch-time ($W_\% > {0\%, 50\%, 100\%}$) restores risk control, as seen by results falling on or below the diagonal in \cref{fig:ablation-beta-harmfulness}.
However, stricter filtering reduces the pool of candidate items, approaching the behavior of the \textsc{Remove} strategy, as indicated in \cref{fig:ablation-beta-replacement,fig:ablation-beta-size}.
This reveals a trade-off between ensuring risk control and maintaining a sufficiently large recommendation set under finite replacement options. 

\textit{\textbf{(RQ5) Risk control is biased toward high-reporting users}.}
Lastly, we analyze the impact of the risk control procedure across user groups.
We can categorize users as being \textit{low-reporting}, who report a smaller fraction of videos ($H_{X} < 0.1$, cf. \cref{tab:statistics-reported-videos}) versus \textit{high-reporting}, who report a higher fraction of videos ($H_{X} \geq 0.1$).
\cref{fig:ablation-experiments-users-group} presents the empirical reduction of unwanted content for both groups under different strategies.
Notably, \textit{low-reporting} users exhibit a greater reduction at lower $\alpha$ values, while other performance metrics (\eg nDCG in \cref{fig:ablation-user-ndcg}) remain stable across groups and strategies.
This suggests that \textit{high-reporting} users disproportionately influence the calibration of the risk threshold $\lambda$, potentially leading to overly conservative behavior for everybody.
In practice, equivalent risk guarantees for \textit{low-reporting} users could be achieved by modifying or removing \textit{fewer items}.

\begin{figure}[t]
     \centering
     \begin{subfigure}[t]{0.40\linewidth}
          \centering
        \includegraphics[width=\linewidth]{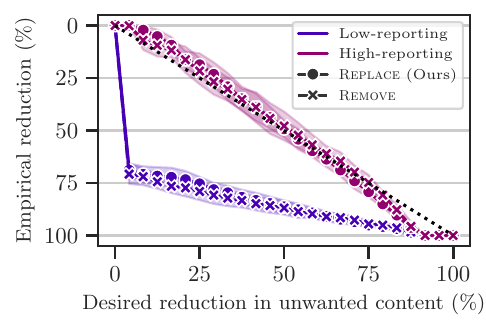}
        \caption{Reduction in $R_H(S_\lambda(U, k))$}
        \label{fig:ablation-user-harm}
     \end{subfigure}
     \begin{subfigure}[t]{0.40\linewidth}
          \centering
        \includegraphics[width=\linewidth]{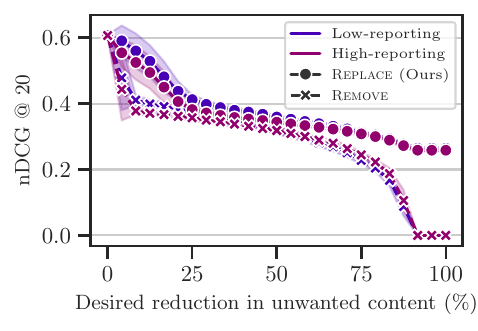}
        \caption{nDCG @ 20}
        \label{fig:ablation-user-ndcg}
     \end{subfigure}
     \caption{Ablation analysis on \textit{low}- or \textit{high}-reporting users (\textsc{LightGCL}, $k=20$ and $\beta = 0\%$). 
     We report the results over 10 runs, where the shaded area indicates the standard deviation.
     }
     \label{fig:ablation-experiments-users-group}
\end{figure}

\section{Discussion and Limitations}
\label{sec:discussion-limitations}

In this section, we discuss some assumptions and limitations of our work, which open up interesting
avenues for future work.

\textit{Methodology.}
The risk control procedure (\cref{alg:risk-control-replacement}) computes a global threshold that minimizes expected risk across users.
Although users can select and reliably achieve a target level of unwanted content reduction, a single global threshold affects user groups differently (\eg \textit{low-} vs. \textit{high-reporting} behavior).
In practice, it may be preferable to personalize the threshold, akin to \textit{group-balanced} conformal prediction \cite{angelopoulos2023gentleintro}, by calibrating $\hat{\lambda}$ using data from a single user, provided sufficient interactions are available.
We also assume that user preferences over unwanted content remain stable over time; future work could relax this assumption by modeling dynamic user profiles \cite{chee2024harm}.
Moreover, the current risk function (\cref{eqn:set-based-harmfulness}) does not account for item position in the ranking and treats each item as contributing equally to risk.
Extensions could incorporate user models such as \textit{discrete choice models} \cite{krause2025exposure-bias-discrete}, enabling more realistic risk functions.
Finally, the guarantees in \cref{theorem:conformal-risk-control} hold \textit{in expectation}; further extensions could include probabilistic guarantees \cite{bates2021distribution, angelopoulos2023recommendation} or support for non-monotonic risk \cite{angelopoulos2021learn}.

\textit{Evaluation}.
Our analysis and experiments rely on a single dataset, as discussed in \cref{sec:unwanted-disliked-harmful}, which currently offers the only realistic basis for studying this phenomenon.
This reliance may limit the external validity of our findings.
Further evaluation, particularly of assumptions such as \cref{assumption:control_variable}, will require additional datasets. However, acquiring such data remains a significant challenge. 
Regulatory frameworks may help address this gap. For example, Article 40 of the European Digital Services Act\footnote{\url{https://eur-lex.europa.eu/eli/reg/2022/2065/oj}}\footnote{\url{https://data-access.dsa.ec.europa.eu/home}} permits vetted researchers to access data from very large online platforms (\eg Facebook, Twitter) for scientific purposes, potentially enabling validation of \cref{alg:risk-control-replacement}.
Nonetheless, privacy and ethical considerations remain crucial, requiring appropriate data handling and anonymization.

\textit{User experience.}
Lastly, replacing certain items in the recommendation pool with previously seen content may lead to unbalanced recommendations, repeatedly exposing users to the same content.
This could have unintended negative effects, such as increased polarization \cite{cinus2022effect,tommasel2021want}.
{\color{black}
While our chosen replacement strategy (see Section~\ref{subsec:property-safe-alternatives}) is relatively conservative, future work could explore alternatives based on curated item lists from the community or user-provided signals, such as liked videos or watch-later queues.
Additionally, recent research on repeated consumption~\cite{wang2019modeling,ren2019repeat,dai2024recode} offers targeted methods that may also address the balance between risk control and recommendation fairness~\cite{wang2023uqfairness}.
Interestingly, alternative approaches could adopt a more systemic perspective.
For example, a video-sharing platform might aggregate user feedback to identify content to downrank, allowing individuals to share “unwanted” content signals within trusted groups, such as friends or communities. 
Users with limited feedback activity could opt into these shared signals, effectively leveraging collective judgments to inform their own recommendations.
This could help address cold-start issues~\cite{park2009pairwise} in the context of safe repeated items.
Finally, user studies could yield more robust empirical evidence on how repetition and replacement strategies affect satisfaction, guiding further development in this area.
}

\section{Conclusions}

This work introduces a simple but effective user-centric procedure (\cf \cref{alg:risk-control-replacement}) grounded in conformal risk control to mitigate the fraction of unwanted content suggested by any recommender systems.
Unlike prior approaches that offer limited guarantees, our method proactively bounds the presence of unwanted content in recommendations by ensuring a connection between user feedback and change in the recommender behavior.
The procedure is model-agnostic, easy to integrate with existing systems, and backed by distribution-free theoretical guarantees.
We provide a detailed analysis of real-world interaction patterns (\eg reporting behavior and watch time) to inform the design of our approach.
We further introduce a practical heuristic for selecting repeated content that balances safety and relevance. 
Experiments on a real-world dataset demonstrate that our framework can controllably reduce exposure to unwanted recommendations by providing a trade-off between the recommendation set size and utility.
Finally, these findings point toward developing safer and user-centric recommender systems.

\section*{Acknowledgments}

The work was partially supported by the following projects: Horizon Europe Programme, grants \#10112\-0237-ELIAS and \#101120763-TANGO.
Funded by the European Union. Views and opinions expressed are however those of the author(s) only and do not necessarily reflect those of the European Union or the European Health and Digital Executive Agency (HaDEA). Neither the European Union nor the granting authority can be held responsible for them.

\bibliography{references}
\bibliographystyle{plainnat}

\end{document}